\title{\LARGE \bf
Task and Motion Planning of Dynamic Systems \\ using Hyperproperties for Signal Temporal Logics\vspace{-10pt}
}
\author{Jianing~Zhao,
	Bowen~Ye,
	Xinyi~Yu, Rupak~Majumdar and
	Xiang~Yin
	\thanks{This work was supported by the National Natural Science Foundation of China (62061136004, 62173226, 61803259). 
 }% <-this % stops a space
\thanks{J. Zhao, B. Ye and X. Yin  are with the School of Automation and Intelligent Sensing, Shanghai Jiao Tong University, and the Key Laboratory of System Control and Information Processing, the Ministry of Education of China, Shanghai 200240, China. {\tt  E-mail: \{jnzhao,yebowen1025,yinxiang\}@sjtu.edu.cn}.}
\thanks{
X. Yu is with the Thomas Lord Department of Computer Science, University of Southern California, Los Angeles, CA 90089, USA. 
}
\thanks{
R. Majumdar is with the Max-Planck Institute for Software Systems, Kaiserslautern 67663, Germany. {\tt  E-mail: rupak@mpi-sws.org}. (Corresponding Author: Xiang Yin)
}
}
\def \u{{\mathbf{u}}}
\def \p{{\bm{p}}}
\def \w{{\mathbf{w}}}
\def \x{{\mathbf{x}}}
\def \until{{\mathbf{U}}}
\def \eventually{{\mathbf{F}}}
\def \always{{\mathbf{G}}}
\def \U{\mathcal{U}}
\def \UU{\mathbb{U}}
\def \T{\mathcal{T}}
\def \V{\mathcal{V}}
\def \X{\mathcal{X}}
\def \RR{\mathbb{R}}
\def \NN{\mathbb{N}}
\def \res{\emph{\text{res}}}
\def \true{\emph{\text{True}}}
\def \false{\emph{\text{False}}}
\def \Null{\emph{\text{Null}}}
\newtheorem{myprob}{Problem}
\newtheorem{myrem}{Remark}
\newtheorem{mydef}{Definition}
\newtheorem{mythm}{Theorem}
\begin{document}

\maketitle

% \thispagestyle{empty}
% \pagestyle{empty}
% \setlength{\abovecaptionskip}{0pt}
% \setlength{\belowcaptionskip}{0pt}
% \setlength{\textfloatsep}{3pt} 

%%%%%%%%%%%%%%%%%%%%%%%%%%%%%%%%%%%%%%%%%%%%%%%%%%%%%%%%%%%%%%%%%%%%%%%%%%%%%%%%
\begin{abstract}
We investigate the task and motion planning problem for dynamical systems under signal temporal logic (STL) specifications. Existing works on STL control synthesis mainly focus on generating plans that satisfy properties over a single executed trajectory. In this work, we consider the planning problem for \emph{hyperproperties} evaluated over a set of possible trajectories, which naturally arise in information-flow control problems. Specifically, we study discrete-time dynamical systems and employ the recently developed temporal logic HyperSTL as the new objective for planning. To solve this problem, we propose a novel recursive counterexample-guided synthesis approach capable of effectively handling HyperSTL specifications with multiple alternating quantifiers. The proposed method is not only applicable to planning but also extends to HyperSTL model checking for discrete-time dynamical systems. Finally, we present case studies on security-preserving planning and ambiguity-free planning to demonstrate the effectiveness of the proposed HyperSTL planning framework.
\end{abstract}

\begin{keywords}
    Signal temporal logic, hyperproperties, task and motion planning.
\end{keywords}

\section{Introduction}
Planning and decision-making are fundamental problems in autonomous robotics and \emph{cyber-physical systems} (CPS). 
In recent years, there has been growing interest in task and motion planning  for \emph{high-level specifications} \cite{liu2022secure}. 
To specify objectives for CPS, various temporal logics have been developed, offering expressive and user-friendly tools for formally describing and synthesizing complex tasks. 
Particularly, signal temporal logic (STL) \cite{maler2004monitoring} provides a systematic language for describing complex tasks in continuous dynamical systems with real-valued signals. 
It can express specifications such as ``remain in a region for at least two minutes and then reach another region within three minutes." 
Recently, STL-based planning has been extensively studied and successfully applied in various engineering CPS, including autonomous robots \cite{yu2024online}, power systems \cite{park2020mitigation}, and traffic management \cite{arechiga2019specifying}.

In the context of temporal logic planning for dynamical systems, a central problem is to find a sequence of control inputs such that the system trajectory satisfies a given STL specification. 
To solve the STL task and motion planning problem, different techniques are proposed including the mixed integer programming \cite{raman2014model}, control barrier functions \cite{lindemann2018control} and gradient-based optimizations \cite{mehdipour2019arithmetic}.
When accounting for system uncertainties or disturbances, the planning problem can be integrated into a model predictive control  framework to design feedback controllers that dynamically compensate for real-time perturbations \cite{raman2014model, raman2015reactive,farahani2018shrinking, meng2023signal}. Recent work has extended these methods to multi-agent settings, where the goal is to coordinate multiple trajectories to satisfy a global STL specification \cite{sun2022multi,liu2025controller}.

\begin{figure}
    \centering
    \includegraphics[scale=0.27]{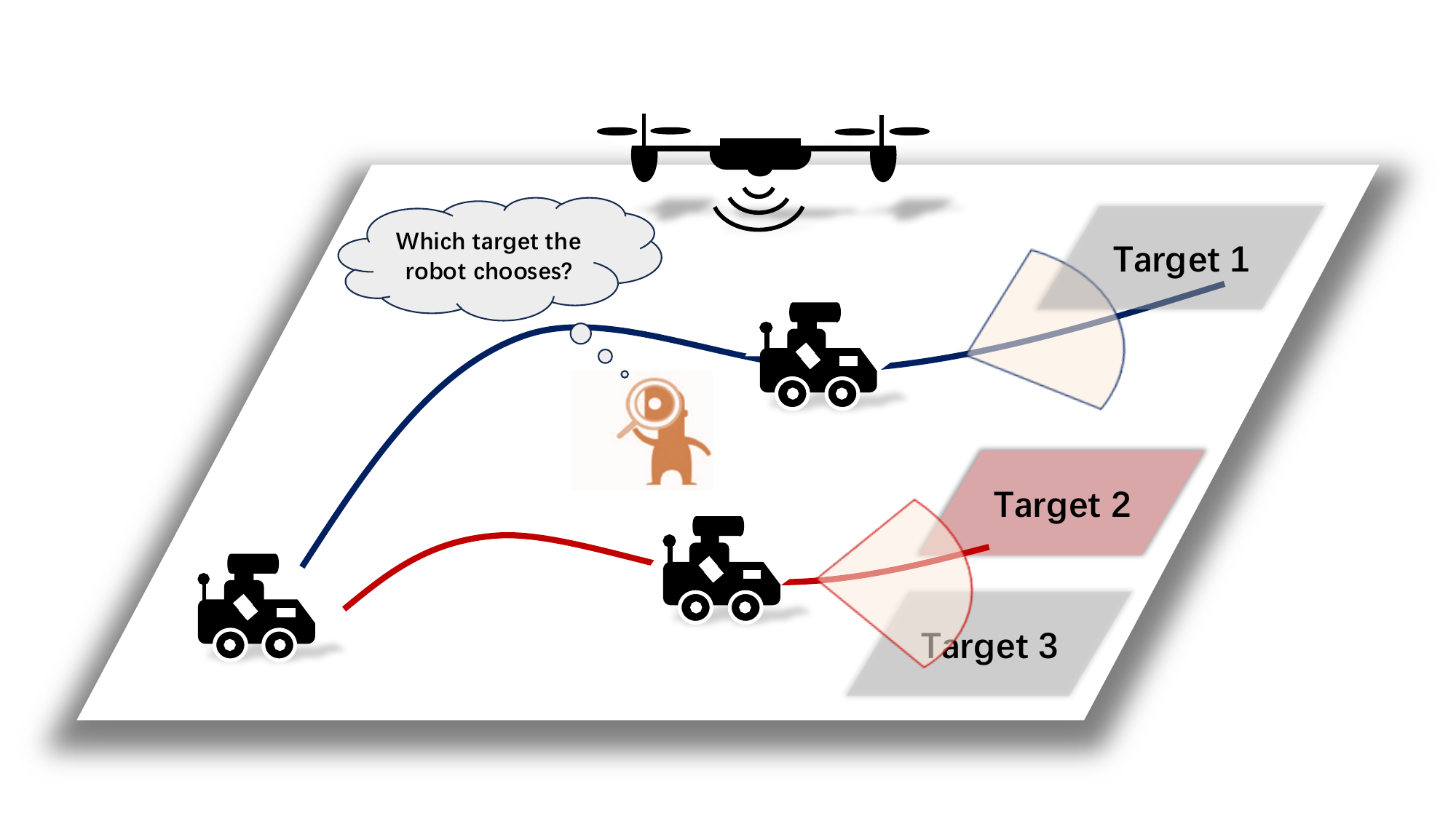}\vspace{-10pt}
    \caption{Motivating Example}
    \label{fig:example}
\end{figure}

However, all the aforementioned works focus solely on ensuring the correctness of one system execution.
In many practical applications, it is also essential to reason about the system's behavior across a set of possible executions, a concept known as \emph{hyperproperties}  \cite{clarkson2010hyperproperties,anand2021formal, liu2022secure, zhao2024unified}.
To illustrate this, we consider a motivating scenario depicted in Figure~\ref{fig:example}. 
Suppose a robot navigates a workspace aiming to reach one of three possible destinations. 
Meanwhile,  a malicious agent attempts to infer the destination of the robot, using the knowledge of 
(i) the robot's dynamics, 
(ii) its planning objective, and (iii) its real-time state trajectory.
From a security perspective, the robot aims to avoid revealing its true destination prematurely.
If the robot optimizes only for task completion, it could take either the blue or red trajectory shown in the figure, where the shaded sectors represent its reachable set within the next two time steps. 
However, choosing the blue trajectory would allow the intruder to conclusively determine that the robot is heading to Target 1 two steps in advance. 
In contrast, the red trajectory leaves the intruder uncertain until arrival as the robot could be moving toward either Target 2 or Target 3. Thus, the red trajectory not only fulfills the task but also preserves security by concealing critical information.

%In the formal verification literature, various 

In this paper, we address the motion planning problem for dynamical systems under hyperproperties specified in HyperSTL. 
We focus on a fragment called existential  HyperSTL, which is suitable for trajectory planning.
%and presen \cite{nguyen2017hyperproperties}, and HyperMTL \cite{bonakdarpour2020model}. 
Particularly, in the context of task and motion planning, 
\cite{wang2020hyperproperties} investigated the problem of synthesizing open-loop plans satisfying HyperLTL specifications. 
This work was later extended in \cite{bonnah2023motion}, where the authors studied planning under HyperTWTL specifications, which is an extension of TWTL that incorporates explicit timing constraints for hyperproperties.
However, both HyperLTL and HyperTWTL are limited to discrete and logical specifications. 
For many CPS applications, we require specifications that can directly reason about continuous system dynamics, such as the nonholonomic constraints of mobile robots. While HyperSTL has been demonstrated as a powerful framework for quantitative evaluation of real-valued signals over multiple traces, the  planning problem for continuous dynamical systems under HyperSTL specifications remains, to the best of our knowledge, an open challenge in the field.
%t a corresponding synthesis algorithm. 
Our approach builds upon  the mixed integer  programming-based optimization for STL trajectory planning, and   counterexample-guided synthesis for STL reactive synthesis.  We integrate these methods into a novel recursive counterexample-guided synthesis framework capable of handling the quantifier alternations over multiple traces inherent in HyperSTL specifications.
The main contributions of this work are threefold. First, in contrast to prior work on hyperproperty synthesis in purely logical settings \cite{wang2020hyperproperties,bonnah2023motion}, our solution can handle continuous systems with real-valued signals. 
Second, unlike existing HyperSTL model checking approaches \cite{nguyen2017hyperproperties,bartocci2023mining} that operate on finite sets of enumerated signals, our method can synthesize plans for discrete-time dynamical systems that generate signals. 
Finally, we demonstrate the practical applicability of our framework through case studies involving information-flow security properties in robot motion planning scenarios.

The remaining parts are organized as follows. 
In Section~\ref{sec:problem}, 
we review some basic preliminaries and formulate the HyperSTL  planning problem. 
The main synthesis algorithms are presented in Section~\ref{sec:algorithm}. 
In Section~\ref{sec:case}, several case studies are presented 
motivated by  information-flow properties for dynamic systems under STL specifications. 
Finally, we conclude the paper in Section~\ref{sec:conclusion}.

\section{Problem Formulation}\label{sec:problem}

\subsection{System Model}
We consider a dynamic system of form\vspace{-3pt}
\begin{equation}\label{eq-dym}
   \Sigma: \  x_{t+1}=f(x_t,u_t),\vspace{-3pt}
\end{equation}
where $x_t\!\in\!\X\subseteq \RR^n$ and $u_t\!\in\!\U\subseteq\RR^m$ 
are the system state and  control input at instant $t$, respectively, 
and $f\!:\!\X\times\U\!\to\!\X$ is the transition function describing the dynamic of the system.
We assume  that the initial state $x_0\in\X$ is given.

Suppose  at time instant $t\!\in\!\NN$, the system $\Sigma$ is in state $x_t\!\in\!\X$. 
Then given a sequence of control inputs $\u_{t:N-1}\!=\!u_tu_{t+1}\cdots u_{N-1}\!\in\!\U^{N-t}$, the solution of the system is $\xi_f(x_t,\u_{t:N-1})\!=\!x_{t+1}\cdots x_N\in\X^{N-t}$ such that $x_{i+1}\!=\!f(x_i,u_i),\forall i\!\geq \!t$.
A \emph{finite} state sequence $x_0x_1\cdots x_N$ is said to be a 
 \emph{trace} of system $\Sigma$ if it is a solution from $x_0$ under some control inputs. 
We denote by $\T_{\Sigma}(x_0)$ and $\T_\Sigma$ the set of all traces generated from the initial state $x_0\in\X_0$ and the set of all traces generated by system $\Sigma$ from any states. 
% Generally, both $\T_{\Sigma}(x_0)$ and $\T_\Sigma$ are infinite sets.

\subsection{Signal Temporal Logics}
To describe the high-level specifications of system trajectories, we use signal temporal logic (STL) with bounded-time temporal operators, defined by the following syntax:\vspace{-3pt}
\begin{equation}\label{eq-stl}
\phi::=\top\mid\mu_\nu\mid\neg\phi\mid\phi_1\wedge\phi_2\mid\phi_1\until_{[a,b]}\phi_2,\vspace{-3pt}
\end{equation}
where $\top$ is the true predicate, 
$\mu_\nu$ is a predicate whose truth value is determined by the sign of its underlying predicate function $\nu\!:\!\RR^n\!\to\!\RR$, i.e., 
for state $x\in\RR$,  $\mu_\nu$ is true iff $\nu(x)\geq 0$. 
Notations $\neg$ and $\wedge$ are the standard Boolean operators ``negation" and ``conjunction", respectively, which can further induce ``disjunction"  $\phi_1\!\vee\!\phi_2$ and ``implication"  $\phi_1\!\to\!\phi_2$. Notation $\until_{[a,b]}$ is the temporal operator ``until", where $a,b\in\RR_{\geq0}$ are  time instants, and it can also induce  temporal operators ``eventually" and ``always" by $\eventually_{[a,b]}\phi:=\top\until_{[a,b]}\phi$ and $\always_{[a,b]}:=\neg\eventually_{[a,b]}\neg\phi$, respectively.

STL formulae are evaluated on  state sequences. 
We denote by $(\x,t)\models\phi$  that sequence $\x$ satisfies $\phi$ at instant $t$, and we write $\x\models\phi$ whenever $(\x,0)\models\phi$. 
The reader is referred to \cite{maler2004monitoring} for more details on the (Boolean) semantics of STL. 
Particularly, we have  
$(\x,t)\models \mu_{\nu}$    iff   $\nu(x_t)\geq 0$, and
$(\x,t)\models \phi_1\until_{[a,b]}\phi_2$    iff  there exists $t'\in[t+a,t+b]$ such that $(\x,t')\models\phi_2$ and for any $ t''\in[t+a,t']$, we have $(\x,t'')\models\phi_1$. 
In some cases, it is useful to quantitatively evaluate the robustness of STL \cite{donze2010robust}. 
We denote by $\rho^\phi(\x,t)$ the robust value of sequence $\x$  at instant $t$ 
and we have  $(\x,t)\models\phi$ iff $\rho^\phi(\x,t)>0$.

\vspace{-3pt}
\subsection{HyperSTL} 
Let $\V=\{\pi_1,\pi_2,\ldots\}$ be a set of trace variables, where each $\pi_i$ represents an individual trace.
The syntax of HyperSTL is given by \cite{nguyen2017hyperproperties}:
\vspace{-3pt}
\begin{subequations}\label{eq-syntax}
    \begin{align}
        \Phi&::=\exists \pi.\Phi\mid\forall \pi.\Phi\mid\phi\\
    \phi&::=\top\mid\mu_{\nu}^\Theta\mid\neg\phi\mid\phi_1\wedge\phi_2\mid\phi_1\until_{[a,b]}\phi_2\label{eq-stlhyper}\vspace{-6pt}
    \end{align}
\end{subequations}
where $\forall$ and $\exists$ are the universal and existential quantifiers, respectively. 
Note that $\phi$ is essentially an STL formula, 
and the only difference is that 
predicate $\mu_\nu^\Theta$ is parameterized by a   set of trace variables $\Theta\subseteq\V$. Specifically, 
$\mu_\nu^\Theta$ is a predicate whose value is determined by its underlying predicate function $\nu:\RR^{n\times|\Theta|}\to\RR$ such that 
$\mu_\nu^\Theta$ is true iff 
$\nu(x^{\Theta})> 0$, 
where $x^\Theta:=(x^1,\ldots,x^{|\Theta|})\in\RR^{n \times|\Theta|}$ 
is a  state tuple
and each $x^i\!\in\!\X$ denotes the corresponding state of trace $\pi_i$ in $\Theta$. 

Therefore, different from the standard STL,  $\phi$ defined in \eqref{eq-stlhyper} involves multiple trace variables. 
Let  $\x^\Theta\!=\!(\x^1,\x^2,\ldots,\x^{|\Theta|})$ 
be a trace tuple, and we denote $\x^\Theta_t \coloneq  (x_t^1,x_t^2,\ldots,x_t^{|\Theta|})$ as the state tuple at instant $t$.  The semantics of HyperSTL are defined  over a (finite or infinite) set of traces $\T$ and a partial mapping (called \emph{trace assignment}) $\Pi\!:\!\V\!\to\! \T$. 
We use notation $(\Pi,t)\models_\T\Phi$ to denote that a HyperSTL formula $\Phi$ is satisfied by a set of traces $\T$ at time instant $t$. 
The validity judgement of a HyperSTL formula at time instant $t$ is defined  recursive by \cite{nguyen2017hyperproperties}:
\begin{equation}{\small
\begin{tabular}{lcl}
$(\Pi,t)\models_\T \exists\pi.\Phi$  &  iff  & $\exists \xi\in \T:\Pi(\pi)=\xi\wedge\xi\models\Phi$\\
$(\Pi,t)\models_\T \forall\pi.\Phi$  &  iff  & $\forall \xi\in \T:\Pi(\pi)=\xi\wedge\xi\models\Phi$\\
$(\Pi,t)\models_\T \mu_\nu^\Theta$  &  iff  & $\nu(\Pi(\Theta)_t)\geq0$\\
$(\Pi,t)\models_\T \neg\phi$ &  iff  & $(\Pi,t)\not\models_\T\phi$\\
$(\Pi,t)\models_\T \phi_1\wedge\phi_2$  &  iff  & $(\Pi,t)\models_\T\phi_1\wedge(\Pi,t)\models_\T\phi_2$\\
$(\Pi,t)\models_\T \phi_1\until_{[a,b]}\phi_2$  &  iff  & $\exists t'\!\in\![t\!+\!a,t\!+\!b]\!:\!(\Pi,t')\!\models_\T\!\phi_2$\\
&& $\forall t''\!\in\![t+a,t']\!:\!(\Pi,t'')\!\models_\T\!\phi_1$\notag
\end{tabular} }
\end{equation}
We say a set of traces $\T$ satisfies HyperSTL formula $\Phi$, denoted by $\T\models\Phi$, if $(\Pi_{\emptyset},0)\models_\T\Phi$. We say a system $\Sigma$ satisfies $\Phi$, denoted by $\Sigma\models\Phi$, if $\T_\Sigma\models\Phi$.

% Similar to the robustness of STL \cite{raman2014model}, we define the quantitative semantics of HyperSTL without quantifiers. To be specific, given a trace tuple $\x^\Theta$ and time instant $t$, the robustness of $\x^\Theta$ at $t$ is defined by: 
% \begin{flalign}
%     \rho^\nu(\x^\Theta,t)&=\nu(\x^\Theta_t)\notag\\
%     \rho^{\neg\nu}(\x^\Theta,t)&=-\nu(\x_t^\Theta)\notag\\
%     \rho^{\phi_1\wedge\phi_2}(\x^\Theta,t)&=\min(\rho^{\phi_1}(\x^\Theta,t),\rho^{\phi_2}(\x^\Theta,t))\notag\\
%     \rho^{\always_{[a,b]}\phi }(\x^\Theta,t)&=\min_{t'\in[t+a,t+b]}\rho^\phi(\x^\Theta,t')\notag\\
%     \rho^{\phi_1\until\phi_2}(\x^\Theta,t)&=\!\!\!\!\!\!\max_{t'\in[t+a,t+b]}\!\!\!\min\left(\rho^{\phi_2}(\x^\Theta,t'),\!\!\min_{t''\in[t,t']}\!\rho^{\phi_1}(\x^\Theta,t'')\right)\!\notag
% \end{flalign}
% Naturally,   we have $\x^\Theta\models\phi\Leftrightarrow\rho^\phi(\x^\Theta,0)>0$.\XY{here need? solved}

\subsection{HyperSTL Planning Problem}
 
In the context of  task and motion planning, one  needs to find a specific trace  to execute. We consider  a fragment of HyperSTL called ``\emph{Existential HyperSTL}" ($\exists$-HyperSTL) that starts with the existential quantifier $\exists$ in the form of $\Phi::=\exists\pi.\Phi'$, where $\Phi'$ is an arbitrary HyperSTL formula.

\begin{myprob}
    Given system \eqref{eq-dym} with  $x_0\!\in\!\X$ and   $\exists$-HyperSTL formula $\Phi\!=\!\exists\pi.\Phi'$, check whether  the system $\T_\Sigma(x_0)$ satisfies $\Phi\!=\!\exists\pi.\Phi'$. If so, find the control input sequence $\u\!\in\!\U^N$ such that $\pi\!=\!\xi_f(x_0,\u)$ is an instance satisfying $\Phi$.
\end{myprob} 

% For convenience, before the formal synthesis algorithm is developed, we first review some notions w.r.t. the quantification in \emph{hyperproperties}. 
% In hyperproperties specified by certain hyper temporal logics, the \emph{alternation depth} is referred to as the number of times the quantifiers alter from existential to universal, or vice versa. Specifically, if the alternation depth of the HyperSTL formula equals 1, we call it \emph{alternation-free} HyperSTL. 
% In what follows, we will develop different types of algorithms according to the alternation depth of HyperSTL.

\section{Planning from HyperSTL Specifications}\label{sec:algorithm}
In this section, we present our solution approach for the HyperSTL planning problem. 
We begin by examining two special cases: (i) alternation-free formulae and (ii) formulae with alternation depth 1. Building on these two special cases, we then develop a general solution that combines and extends the techniques developed for each case.

\subsection{Case of Alternation-free HyperSTL}

First, we consider the  synthesis problem for the special case of alternation-free HyperSTL of the  following form 
\begin{equation}\label{eq-alternationfree}
    \Phi=\exists\pi_1.\exists\pi_2.\cdots.\exists\pi_n.\phi.
\end{equation}
This fragment can be handled by extending the technique for  standard  STL synthesis. 
Essentially, it is equivalent to consider an augmented system that consists of $n$ copies of the original system 
and then to find a $n$-tuple of  control input sequences $\u_1,\ldots,\u_n\!\in\!\U^N$, where $\u_i\!=\!u_0^iu_1^i\cdots u_{N-1}^i$, 
such that they jointly satisfy the STL specification $\phi$. 

To this end, we first encode the STL constraints by binary variable according to the technique in \cite{raman2014model}. We denote by $\texttt{constr}(\phi)$ the set of all constraints that encode the satisfaction of STL formula $\phi$.

As for the objective function, since we are only interested in the first control input $\u_1$, given the initial state $x_0$ and the controller $\u_1$, we could define a generic function $J\!:\!\X^{N+1}\times\U^N\!\to\!\RR$ to evaluate the cost incurred by $\u_1$. Therefore, we have the following optimization problem:
\begin{subequations}\label{eq-stlpro}
    \begin{align}
	    &\underset{\u_1,\ldots,\u_n\in\U^N}{\text{minimize}} ~~~~ J(x_0,\u_1 )\\
	     \text{s.t. } &~~~ \texttt{constr}(\phi)\\
      &~~~ x_{t+1}^i=f(x_t^i,u_t^i),\forall t\!=\!0,\ldots,N-1
\end{align}
\end{subequations}
where $x_0^i=x_0$ for any $i=1,\ldots,n$ and $(x_t^1,x_t^2,\ldots,x_t^n)$ is the augmented system state at time instant $t$.
For the sake of simplicity, we denote  the solution of the above optimization problem, 
when $\u_i\in \UU_i$, as a procedure
\[
\u_1,\u_2,\ldots,\u_n\gets \texttt{SolvePlan}(\phi,\UU_1,\UU_2,\ldots,\UU_n).
\]
Hence, if $\texttt{SolvePlan}(\phi,\U^N,\U^N,\ldots,\U^N)$ has a feasible solution, 
then the first component $\u_1$ is the solution to our problem. 
For convenience, in what follows, we use  notation
\[
\xi_f(x_0,\u_1,\ldots,\u_n):=(\xi_f(x_0,\u_1),\ldots,\xi_f(x_0,\u_n))
\]
to represent the trace tuple   generated by each $\u_i$ from   $x_0$
and we also denote by $\xi_f(x_0,\u_1,\ldots,\u_n)\models\phi$ if the STL constraint and system constraint in \eqref{eq-stlpro} are satisfied.

\subsection{Case of HyperSTL with Alternation Depth One}

\begin{algorithm}[t]
	\caption{Control Synthesis  with Depth One}
	\label{alg-1}
	\KwIn{System $\Sigma$ and HyperSTL $\Phi$ in \eqref{eq-alter1}}
	\KwOut{Control input $\u$}
        $\u_1,\u_2,\ldots,\u_n\!\gets\!\texttt{SolvePlan}(\phi,\U^N\!,\U^N\!,\ldots,\U^N)$;\\
        $\hat{\U}_i\gets\{\u_i\},\forall m+1\leq i\leq n$;\\
	\While{True}{
        {\small $\res,\u_{m+1}',\dots,\u_n'\!\!\gets\!\texttt{CountCheck}(\u_1,\ldots,\u_m,\phi)$;}
            \\
        \eIf{\emph{$\res=\true$}}{
        \textbf{return} $\u_1$;
        }{
        $\hat{\U}_i\gets\hat{\U}_i\cup\{\u_i'\},\forall m+1\leq i\leq n$;
        }
        Find $\u_1,\u_2,\ldots,\u_m\in\U^N$ that
        \begin{align}
            &~~~~~~~~\text{minimize}~~J(x_0,\u_1)\notag\\
            \!\!\!\!\!\!\text{s.t. }
            &
            \forall \u_{i}\!\in\!\hat{\U}_{i}\!:\!\xi_f(x_0,\u_1,\ldots,\u_n)\!\models\!\phi,i\!=\!m+1,\ldots,n\notag
        \end{align}
        \\
        \If{\emph{the above problem has no solution}}{
        \textbf{return} \emph{``task $\Phi$ is infeasible"}
        }

	}

        \vspace{5pt}
        \textbf{procedure} $\texttt{CountCheck}(\u_1,\ldots,\u_m,\phi)$\\ 
        Find $\u_{m+1}',\ldots,\u_n'\in\U^N$ that
        \begin{align}
            &\text{minimize}~\rho^\phi(\xi_f(x_0,\u_1,\ldots,\u_m,\u_{m+1}',\ldots,\u_n'))\notag
        \end{align}\\
        \eIf{$\rho^\phi(\xi_f(x_0,\u_1,\ldots,\u_m,\u_{m+1}',\ldots,\u_n'))>0$}{
         \textbf{return } \true, \Null;
        }{
        \textbf{return } \false, $\u_{m+1}',\ldots,\u_n'$;
        }
\end{algorithm}

In this subsection, we further consider the following HyperSTL with alternation depth one
\begin{equation}\label{eq-alter1}
\Phi=\exists\pi_1.\exists\pi_2.\cdots\exists\pi_m.\forall\pi_{m+1}.\forall\pi_{m+2}.\cdots\forall\pi_n.\phi.
\end{equation} 
% For convenience, given two tuples of traces $\x\!=\!(\x^1,\ldots,\x^{|\x|})$ and $\y\!=\!(\y^1,\ldots,\y^{|\y|})$, we denote their by 
% \begin{equation}
%     (\x,\y)=(\x^1,\ldots,\x^{|\x|},\y^1,\ldots,\y^{|\y|}).
% \end{equation} 
The key idea is to frame this problem as an STL reactive synthesis task, where the controller must jointly synthesize the first $m$ traces while ensuring robustness against any possible adversarial behaviors in the remaining $n-m$ traces. This problem can be solved using a \emph{counterexample-guided synthesis} approach, as implemented in Algorithm~\ref{alg-1}. Specifically, in line~1, we obtain initial input  $\mathbf{u}_1,\ldots,\mathbf{u}_n \in \mathcal{U}^N$ by calling $\texttt{SolvePlan}$, where the solution space for each input is the entire input space $\mathcal{U}^N$. 
In line~2, for each universally quantified input $\mathbf{u}_i$ (where $m+1 \leq i \leq n$), we initialize candidate domain $\hat{\mathcal{U}}_i$ with the first plan $\mathbf{u}_{i}$. These sets serve as finite constraints for the optimization problem and are updated whenever new counterexamples are encountered.

During the while-loop iteration, the algorithm checks whether the current plan for the first $m$ components satisfies the universality requirement for the last $n\!-\!m$ components using   procedure $\texttt{CountCheck}$.  This procedure attempts to falsify the STL $\phi$ for the fixed current inputs $\mathbf{u}_1,\ldots,\mathbf{u}_m$ by finding counterexample instances $\mathbf{u}_{m+1}',\ldots,\mathbf{u}_n'$. 
If the inputs pass this counterexample check, $\mathbf{u}_1$ constitutes a valid solution. 
Otherwise, the discovered counterexamples are incorporated into the candidate domains $\hat{\mathcal{U}}_{m+1},\ldots,\hat{\mathcal{U}}_n$ in line~8. 
The algorithm then uses these expanded (yet still finite) candidate domains as constraints in line~10 to synthesize a new tuple of inputs $\mathbf{u}_1,\ldots,\mathbf{u}_m$ by solving the optimization problem. If at any point the optimization problem becomes infeasible given the current candidate domains, we can immediately conclude that the entire problem is infeasible.
Note that in procedure $\texttt{CountCheck}$, rather than simply performing quantitative falsification of $\phi$, we additionally minimize the robustness degree of the STL formula as a quantitative optimization objective. This approach tends to yield more effective counterexamples in practice.

\subsection{Control Synthesis for General HyperSTL Formulae}
Now, we are ready to tackle the general case of  HyperSTL synthesis with arbitrary alternation depth of quantifier.

Let $k$ be an index of trace variable. 
For the sake of clarity,  
we can index  the quantifier of each variable in $\Phi$ as follows:
\begin{flalign}\label{eq-general}
    &\Phi=\exists\pi_1. Q_2\pi_2.\cdots Q_{k-1}\pi_{k-1}.(Q_k\pi_k.Q_k\pi_{k+1}.\cdots Q_k\pi_{m_k}.)\notag\\
    &(\bar{Q}_k\pi_{m_k+1}.\bar{Q}_k\pi_{m_k+2}.\cdots \bar{Q}_k\pi_{p_k}.)Q_k\pi_{p_k+1}.\cdots Q_n\pi_n.\phi 
\end{flalign}
where   $\bar{Q}\!=\!\forall$ if $Q\!=\!\exists$ and $\bar{Q}\!=\!\exists$ if $Q\!=\!\forall$.
Intuitively, 
$m_k$ represents the last index of the consecutive quantifiers after $Q_k$ that are same with $Q_k$, i.e., $Q_j\!=\!Q_k,\forall k\leq j\leq m_k$ and $Q_{m_k+1}\!=\!\bar{Q}_k$, while $p_i$ denotes the last index of the consecutive quantifiers after $Q_{m_k+1}$ that are same with $Q_{m_k+1}$, i.e., $Q_j\!=\!\bar{Q}_k,\forall m_k+1\leq j\leq p_k$ and $Q_{p_k+1}=Q_k$.

{\footnotesize
\begin{algorithm}[!ht]
	\caption{Synthesis of General HyperSTL}
	\label{alg-2}
	\KwIn{System $\Sigma$ and general HyperSTL $\Phi$ in \eqref{eq-general}}
	\KwOut{Control input sequence $\u$}
	$\u_1,\ldots,\u_n\gets\texttt{SolvePlan}(\phi,\U^N,\ldots,\U^N)$;\\
        \If{\emph{the above problem has no solution}}{
        \textbf{return } \emph{``task $\Phi$ is infeasible"}\\
        }
        $\hat{\U}_i\gets \{\u_i\},\forall m_1+1\leq i\leq p_1$;\\
        
        \While{\text{True}}
        {
        {\footnotesize $\res, \u_{m_1+1}',\ldots,\u_{p_1}'\gets\texttt{CheckHyper}(\u_1,\ldots,\u_{m_1},\phi)$;}\\
        \eIf{
        \emph{$\res=\true$}
        }{
        \textbf{return} $\u_1$;
        }{
        $\hat{\U}_i\gets\hat{\U}_i\cup\{\u_i'\},\forall m_1+1\leq i\leq p_1$;
        }
        Find $\u_1,\ldots,\u_{m_1}\in\U^N$ that
        \vspace{-4pt}
        \begin{align}
            &~~~~~~~~\text{minimize}~~J(x_0,\u_1)\notag\\
            \!\!\!\!\!\!\text{s.t. }
            &
            \forall \u_{i}\!\in\!\hat{\U}_{i},\exists\u_j\in\U^N:\xi_f(x_0,\u_1,\ldots,\u_n)\!\models\!\phi,\notag\\
            &i=m_1+1,\ldots,p_1,~j=p_1+1,\ldots,n\notag
        \end{align}\\
        \vspace{-4pt}
        \If{\emph{the above problem has no solution}}{
        \textbf{return} \emph{``task $\Phi$ is infeasible"}
        }
        }
        \vspace{2pt}
	\textbf{procedure} $\texttt{CheckHyper}(\u_1,\ldots,\u_{k-1},\psi)$\\
        \eIf{$\psi\!=\!\phi,Q_i\!=\!\forall,\forall k\leq i\leq n$ or $~~~\!\!\psi\!=\!\neg\phi,Q_i\!=\!\exists,\forall k\leq i\leq n$\\}{
        {\small $\res, \w_{k},\ldots,\w_{n}\gets\texttt{CountCheck}(\u_1,\ldots,\u_{k-1},\psi)$;}\\
        \eIf{\emph{$\res=\true$}}{
        \textbf{return} \true, \Null
        }{
        \textbf{return} \false, $\w_{k},\ldots,\w_n$;
        }

        }
        {
        $\w_1,\ldots,\w_{k-1},\w_{k},\w_{k+1},\ldots,\w_n\gets \texttt{SolvePlan}(\psi,\{\u_1\},\ldots,\{\u_{k-1}\},\U^N,\ldots,\U^N)$;\\
        $\hat{\U}_i'\leftarrow\{\w_i\},\forall m_k+1\leq i\leq p_k$;\\
        \While{\emph{\true}}{
        $\res,\w_{m_k+1}',\ldots,\w_{p_k}'\!\!\gets\!\texttt{CheckHyper}(\w_1,\ldots,\w_{k-1},\w_{k},\ldots,\w_{m_k},\neg\psi)$;\\
        \If{\emph{$\res=\true$}}{
        \textbf{return} \false, $\w_{m_k+1},\ldots,\w_{p_k}$;
        }
        $\hat{\U}_i'\gets\hat{\U}_i'\cup\{\w_i'\},\forall m_k+1\leq i\leq p_k$;
        
        Find $\w_{k},\w_{k+1},\ldots,\w_{m_k}\in\U^N$ that
        \vspace{-4pt}
        \begin{align}
            &~~~~~~~~\text{minimize}~~J(x_0,\w_1)\notag\\
            \!\!\!\!\!\!\text{s.t. }
            &
            \forall \w_{i}\!\in\!\hat{\U}_{i},\exists\w_j\in\U^N:\xi_f(x_0,\w_1,\ldots,\w_n)\!\models\!\psi,\notag\\
            &i=m_k+1,\ldots,p_k,~j=p_k+1,\ldots,n\notag
        \end{align} \\
        \vspace{-4pt}
        \If{\emph{the above problem has no solution}}{
        \textbf{return} \emph{True}, \emph{Null};
        }
        
        }
        }
        
\end{algorithm}
}

The key idea for  solving the general case is to recursively call   procedure \texttt{CheckHyper} that is designed according the both procedures \texttt{SolvePlan} and \texttt{CountCheck}. The solution is summarized as Algorithm~\ref{alg-2}. Specifically, in line~1, we obtain initial input $\u_1,\ldots,\u_n\in\U^N$ by \texttt{SolvePlan}. If no feasible input is returned, then we claim that task $\Phi$ is infeasible for system $\Sigma$ in lines 2--3.
Otherwise, we begin to use the \emph{recursive counterexample guided check} to tackle the multiple quantifier alternation in the general HyperSTL $\Phi$ in \eqref{eq-general}. We fix the obtained inputs $\u_1,\ldots,\u_{m_1}$ and view the next $p_1-m_1$ inputs as the check variables. That is, we use procedure \texttt{CheckHyper} to validate whether the traces $\xi_f(x_0,\u_1,\ldots,\u_{m_1})$ satisfy
\begin{equation}
    \forall\pi_{m_1+1}.\cdots\forall\pi_{p_1}.\exists\pi_{p_1+1}.\cdots Q_n\pi_n.\phi.
\end{equation}
Thus, we initiate candidate domain $\hat{\U}_i$ with $\u_{m_1+1},\ldots,\u_{p_1}$ in line 4. 
If satisfied, i.e., $\texttt{CheckHyper}$ returns \emph{True}, we know that the inputs $\u_1,\ldots,\u_{m_1}$ are feasible and return $\u_1$ as the output in lines 7--8. Otherwise, $\texttt{CheckHyper}$ returns \emph{False} together with the counterexample instances $\u_{m_1+1}',\ldots,\u_{p_1}'$. Then, we add them to candidate domains $\hat{\U}_i,\forall i\!=\!m_1+1,\ldots,p_1$ in lines 9--10 and repeat finding new inputs $\u_1,\ldots,\u_n$ in line 11 until $\texttt{CheckHyper}$ returns \emph{True}. If at any point the optimization problem becomes infeasible given the current candidate domains, we immediately claim that the problem is infeasible in lines 12--13.

Next, we elaborate on the details of  procedure $\texttt{CheckHyper}$. 
We denote by $k-1$ the number of its inputs. Then, we   check whether $\xi_f(x_0,\u_1,\ldots,\u_{k-1})$ satisfies \vspace{-2pt}
\begin{equation}\label{eq-remain}
\forall\pi_{k}.\forall\pi_{k+1}.\cdots\forall\pi_{m_k}.\exists\pi_{m_k+1}.\cdots\exists\pi_{p_k}.\forall\pi_{p_{k+1}}.\cdots Q_n\pi_n.\psi.\vspace{-2pt}
\end{equation}
where
$\psi\!\in\!\{\phi,\neg\phi\}$ is the formula to check. 
We first determine whether the remaining HyperSTL is already alternation-free.  If so, i.e., all the remaining quantifiers are $\forall$ or $\exists$, then we try to find a counterexample $\w_k,\ldots,\w_n$ that falsifies $\psi$ by calling \texttt{CountCheck} in line 17. 
If the counterexample does not exist, then we claim that $\xi_f(x_0,\u_1,\ldots,\u_{k-1})$ satisfies  \eqref{eq-remain} and return \emph{True} in lines 18--19. 
Otherwise, we return \emph{False} together with the counterexample in lines 20--21. 
On the other hand, if there are still quantifier alternations, then we fix the given inputs $\u_1,\ldots,\u_{k-1}$ and begin to check whether there are inputs $\w_k,\w_{k+1},\ldots,\w_{m_k}$ that falsify  \eqref{eq-remain}. 
This is equivalent to check whether or not there are inputs $\w_k,\w_{k+1},\ldots,\w_{m_k}$ such that $\xi_f(x_0,\u_1,\ldots,\u_{k-1},\w_k,\ldots,\w_{m_k})$ satisfies
\begin{equation}\label{eq-remainremain}
    \forall\pi_{m_k+1}.\cdots\forall\pi_{p_k}.\exists\pi_{p_k+1}.\cdots \bar{Q}_n\pi_n.\neg\psi.
\end{equation}
To this end, we first find a candidate for $\w_k,\ldots,\w_n$ in line 23. Naturally, we have $\w_i\!=\!\u_i,\forall i\!=\!1,\ldots,k-1$. Then, for the $m_k+1$-th to $p_k$-th inputs, we add $\w_i$ to candidate domains in line 24 and begin to repeat finding $\w_k,\w_{k+1},\ldots,\w_{m_k}$ until the recursive procedure $\texttt{CheckHyper}$ returns \emph{True}, i.e., \eqref{eq-remainremain} is satisfied.
Specifically, if no feasible solution is found, we declare that $\xi_f(x_0,\u_1,\ldots,\u_{k-1})$ satisfies \eqref{eq-remain}, and return \emph{True} in lines 30--33. 
Otherwise, we declare that the inputs $\w_{m_k+1},\ldots,\w_{p_k}$ falsify \eqref{eq-remain} and return \emph{False} in lines 27--29.

Based on the above explanation and analysis for the Algorithm~\ref{alg-2}, we obtain the following result.
\begin{mythm}
    Given   system $\Sigma$ and   general HyperSTL $\Phi$ in \eqref{eq-general}, the control inputs returned by Algorithm~\ref{alg-2} satisfies $\Phi$.
\end{mythm}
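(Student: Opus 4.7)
The plan is to prove soundness by induction on the quantifier alternation depth of $\Phi$, exploiting the symmetric recursion structure of \texttt{CheckHyper}. I would first establish two auxiliary facts used repeatedly: (i) \texttt{SolvePlan}$(\phi,\UU_1,\ldots,\UU_n)$ returns, when feasible, inputs $\u_i\in\UU_i$ for which $\xi_f(x_0,\u_1,\ldots,\u_n)\models\phi$ — immediate from the soundness of the MILP encoding of \cite{raman2014model}; and (ii) \texttt{CountCheck}$(\u_1,\ldots,\u_m,\phi)$ returns \true{} iff $\xi_f(x_0,\u_1,\ldots,\u_m,\u'_{m+1},\ldots,\u'_n)\models\phi$ for every $\u'_{m+1},\ldots,\u'_n\in\U^N$, and otherwise outputs a genuine falsifying tuple — this follows from minimizing the robustness $\rho^\phi$ over the adversarial variables together with the equivalence $(\x\models\phi)\iff(\rho^\phi(\x)>0)$.

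The main induction would be on the number of quantifier alternations in the suffix being checked. I would prove the following invariant on \texttt{CheckHyper}: whenever \texttt{CheckHyper}$(\u_1,\ldots,\u_{k-1},\psi)$ returns $(\true,\Null)$, the trace prefix $\xi_f(x_0,\u_1,\ldots,\u_{k-1})$ satisfies the residual formula of \eqref{eq-remain} with $\psi$ in place of $\phi$; and whenever it returns \false{} together with witnesses, these witnesses genuinely falsify that residual. The base case occurs when all remaining quantifiers are of one type, and reduces to invariant (ii). For the inductive step, the duality $\neg[\exists\cdots\exists\forall\cdots Q_n\psi]\equiv\forall\cdots\forall\exists\cdots\bar{Q}_n\neg\psi$ shows that the recursive call on $\neg\psi$ at line~26 correctly certifies whether a candidate $(\w_k,\ldots,\w_{m_k})$ is a counterexample to \eqref{eq-remain}; combining the induction hypothesis with infeasibility of the sample-constrained subproblem at line~31 (over the finite domains $\hat{\U}_i'\subseteq\U^N$) then yields both branches of the invariant.

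The outer loop of Algorithm~\ref{alg-2} wraps one more layer of counterexample-guided tightening over $\hat{\U}_{m_1+1},\ldots,\hat{\U}_{p_1}$: the algorithm returns $\u_1$ only when \texttt{CheckHyper} certifies that $(\u_1,\ldots,\u_{m_1})$ satisfies $\forall\pi_{m_1+1}.\cdots\forall\pi_{p_1}.\exists\pi_{p_1+1}.\cdots Q_n\pi_n.\phi$, which together with the outermost existential witnesses $\u_1,\ldots,\u_{m_1}$ yields $\T_\Sigma(x_0)\models\Phi$, as required.

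The principal obstacle I foresee is handling the dualization cleanly. Each nested call alternates between $\psi$ and $\neg\psi$, and the indices $m_k,p_k$ depend on the current quantifier pattern, which swaps between the two cases. To keep the induction manageable I would parameterize the inductive hypothesis by both the suffix formula and a polarity bit indicating whether the outermost remaining quantifier is $\forall$ or $\exists$, so the guarantee carries symmetrically through recursion. A second subtlety is that infeasibility at line~32 is observed only against the finite sample sets $\hat{\U}_i'$, yet it must certify infeasibility over all of $\U^N$; this is valid because sampling only \emph{weakens} the $\forall$-quantified constraint, so infeasibility under the samples implies infeasibility under the full input space. A final precondition to track is that \texttt{CountCheck} is only invoked once the residual formula is alternation-free, which the recursion guarantees by unpeeling exactly one alternation block per level.
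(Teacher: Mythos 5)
Your proposal follows essentially the same route as the paper's proof: induction on the quantifier alternation depth of the residual formula, with the central lemma being the two-sided correctness of \texttt{CheckHyper} (base case via \texttt{CountCheck} on the alternation-free suffix, inductive step via the dualization to $\neg\psi$), wrapped by the outer counterexample-guided loop. Your additional care about the polarity parameterization and the observation that infeasibility over the finite sample sets $\hat{\U}_i'$ implies infeasibility over all of $\U^N$ (since sampling weakens the universal constraint) makes explicit two points the paper's argument leaves implicit, but the structure and key ideas coincide.
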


\begin{proof}
We finish this proof by induction. For an alternation-free HyperSTL in \eqref{eq-alternationfree}, by lines 1--3, we know that, Algorithm~\ref{alg-2} returns a controller satisfying \eqref{eq-alternationfree} if  there is a feasible solution, or returns \emph{``task $\Phi$ is infeasible"} otherwise. 
For a general HyperSTL in \eqref{eq-general}, we aim to prove that Algorithm~\ref{alg-2} also returns a controller satisfying \eqref{eq-general} if it is feasible and returns \emph{``task $\Phi$ is infeasible"} otherwise. 
In fact, it suffices to prove the correctness of the procedure \texttt{CheckHyper}, i.e., for a given HyperSTL in \eqref{eq-general} and given inputs $\u_1,\ldots,\u_{k-1}$, \texttt{CheckHyper} returns \emph{True} if and only if $\xi_f(x_0,\u_1,\ldots,\u_{k-1})$ satisfies the remaining HyperSTL\vspace{-3pt}
\begin{equation}
    \Psi=Q\pi_k.\cdots Q\pi_{m_k}.\bar{Q}\pi_{m_k+1}.\cdots \bar{Q}\pi_{p_k}.Q\pi_{p_k+1}.\cdots Q_n\pi_n.\psi\vspace{-3pt}
\end{equation}
where $Q_n\!=\!\forall$ if $\psi\!=\!\phi$ and $Q_n\!=\!\exists$ if $\psi\!=\!\neg\phi$. First, it is obvious that \texttt{CheckHyper} is correct for a remaining HyperSTL without quantifier alternation, i.e., \vspace{-3pt}
\begin{equation}\label{eq-free}
    \Psi_0=\forall\pi_k.\cdots\forall\pi_n.\phi\vspace{-3pt}
\end{equation}
since Algorithm~\ref{alg-2} (lines 15--21) is then reduced to Algorithm~\ref{alg-1}. For a given tuple of inputs $\u_1,\ldots,\u_{k-1}$ and a remaining HyperSTL with alternation depth 1, i.e.,
\begin{equation}
    \Psi_1=\exists\pi_k.\cdots\exists\pi_{m_k}.\forall\pi_{m_k+1}.\cdots\forall\pi_n.\phi
\end{equation}
By line 23, we have $\w_i\!=\!\u_i$ for $i\!=\!1,\ldots,k-1$. Then by lines 24--31, we aim to find $\w_k,\ldots,\w_{m_k}$ that \emph{satisfies} $\Psi_1$. To check whether $\w_k,\ldots,\w_{m_k}$ is feasible, it suffices to check whether or not there exists $\w_{m_k+1},\ldots,\w_n$ that \emph{falsifies} $\Psi_1$, i.e., whether or not $\xi_f(x_0,\w_1,\ldots,\w_{m_k})$ \emph{satisfies} the following HyperSTL\vspace{-3pt}
\begin{equation}
    \exists\pi_{m_k+1}.\cdots\exists\pi_n.\neg\phi\vspace{-3pt}
\end{equation}
which is an alternation-free remaining HyperSTL in the form of \eqref{eq-free}. Since \texttt{CheckHyper} is correct for a remaining HyperSTL without quantifier alternation, i.e., the result of $\texttt{CheckHyper}(\w_1,\ldots,\w_{m_k},\neg\phi)$ is correct, then we know that $\texttt{CheckHyper}(\w_1,\ldots,\w_{k-1},\phi)$ will return \emph{True} if and only if there is no $\w_{m_k+1},\ldots,\w_n$ that \emph{falsifies} $\Psi_1$, i.e., $\xi_f(x_0,\u_1,\ldots,\u_{k-1})\models\Psi_1$. Therefore, Algorithm~\ref{alg-2} is also feasible for a remaining HyperSTL with alternation depth 1.

Now, suppose that \texttt{CheckHyper} is correct for a remaining $\Psi$ whose alternation depth is $d$.
Next, we begin to consider a remaining $\Psi$ with alternation depth $d+1$, which is in the following form\vspace{-3pt}
\begin{equation}
    \Psi_{d+1}=\bar{Q}\pi_l.\cdots\bar{Q}\pi_{m_l}.\Psi_d\vspace{-3pt}
\end{equation}
where $l-1$ is the number of given traces.

Without loss of generality, we suppose that $Q\!=\!\forall$ and thus $\bar{Q}\!=\!\exists$.
By line 23, we obtain $\w_i\!=\!\u_i$ for $i\!=\!1,\ldots,l-1$. Similarly, by lines 24--31, we aim to find $\w_l,\ldots,\w_{m_l}$ that \emph{satisfy} $\Psi_{d+1}$. To check whether there exist feasible $\w_l,\ldots,\w_{m_l}$, it suffices to check whether there exist $\w_{m_l+1},\ldots,\w_{p_k}$ that \emph{falsify} $\Psi_{d+1}$, i.e., whether or not $\xi_f(x_0,\w_1,\ldots,\w_{m_l})$ \emph{satisfies} the following HyperSTL\vspace{-3pt}
\begin{equation}
    \exists\pi_{m_l+1}.\cdots\exists\pi_{p_l}.\forall\pi_{p_l+1}.\cdots Q_n\pi_n.\psi\vspace{-3pt}
\end{equation}
whose alternation depth is $d$. Since we supposed that \texttt{CheckHyper} is correct for alternation depth $d$, i.e., the result of $\texttt{CheckHyper}(\w_1,\ldots,\w_{m_l},\neg\psi)$ is correct, then we know that $\texttt{CheckHyper}(\u_1,\ldots,\u_{l-1},\psi)$ will return \emph{True} if and only if there is no $\w_{m_l+1},\ldots,\w_{p_l}$ that \emph{falsify} $\Psi_{d+1}$, i.e., $\xi_f(x_0,\u_1,\ldots,\u_{l-1})\models\Psi_{d+1}$.  

By the above induction, we know that \texttt{CheckHyper} is correct for any HyperSTL. Therefore, by lines 5--13, we know that Algorithm~\ref{alg-2} returns a controller satisfying \eqref{eq-general} if it is feasible and returns \emph{``task $\Phi$ is infeasible"} otherwise. The proof is thus completed.
\end{proof}

Similar to the while loop in Algorithm~\ref{alg-1}, if $\U$ is an infinite set, we still need to set a maximum number of iterations for the two while loops in Algorithm~\ref{alg-2}, whose practicality is hold in the same way in the implementation. As a matter of fact, Algorithm~\ref{alg-2} subsumes Algorithm \ref{alg-1}.

\begin{myrem}
    The HyperSTL controller synthesis problem is solved by a recursive counterexample-guided synthesis technique summarized in Algorithm~\ref{alg-2}. In fact, this technique can also be extended to the control synthesis for other types of hyperproperties specified by real-value constraints for continuous dynamic systems with a slight modification of the optimization problem.
\end{myrem}

\section{Applications of HyperSTL Planning}\label{sec:case}

In this section, we present case studies of the proposed  $\exists$-HyperSTL planning problem by examining it from two perspectives. 
First, we consider the security-preserving planning problem, where a robot must prevent external observers from inferring its critical information.  This problem has been explored extensively in the recent literature  but in purely logical frameworks \cite{shi2023synthesis,zheng2023optimal,he2024security}. Second, we investigate the informed planning problem, in which the robot maintains trajectory indistinguishability from other agents' viewpoints to facilitate collaborative objectives.
\subsection{Security-Preserving Planning}
Following the motivating example, let us consider a mobile robot governed by the dynamics in \eqref{eq-dym}, subject to physical constraints $\X$  and $\U$, tasked with satisfying an STL specification $\phi$. 
We assume the presence of an intruder who possesses complete knowledge of the robot's dynamics and its task specification $\phi$, but lacks access to the robot's exact control strategy. 
Consequently, the intruder must infer the robot's intentions solely through trajectory observations.

From the robot's perspective, certain critical states must be protected, represented by a \emph{secret region} $\X_S \subseteq \X$. The security objective requires that whenever the robot enters $\X_S$, the intruder cannot definitively predict this entry before a specified prediction horizon $\Delta T$ has elapsed. This requirement can be formally captured through the notion of \emph{pre-opacity}  defined as follows:

\begin{mydef}[Pre-Opacity]
Given system $\Sigma$ and STL formula $\phi$, we say the system is \emph{$\Delta T$-step opaque} w.r.t. $\X_S$ if $\T_\Sigma$ satisfies
\begin{equation}
    \exists\pi_1.\exists\pi_2.\!\left[\!\phi^{\pi_1}\wedge\phi^{\pi_2}\wedge\always_{[0,T_\phi]}\!\!\left(
    \begin{aligned}
    &\eventually_{[\Delta T,\Delta T]}(x^{\pi_1}\in\X_S)\\
    & \to[ (x^{\pi_1}=x^{\pi_2})\wedge \\
    & \always_{[0,\Delta T]}\left(x^{\pi_2}\notin \X_S\!\right)]
    \end{aligned}
    \!\right)
    \!\right]\notag
\end{equation}
where $T_{\phi}$ is the evaluation horizon of $\phi$ which is the maximum sum of all nested upper bounds.
\end{mydef}

Intuitively, this definition states that we need to find a trajectory $\pi_1$ satisfying the STL task $\phi$ and there exists at least one trajectory $\pi_2$ satisfying $\phi$ such that if $\pi_1$ reaches the secret region within $\Delta T$ steps, then $\pi_2$ must not reach the secret region within $\Delta T$ steps. Therefore, $\pi_2$ plays as a plausible trajectory such that the intruder cannot determine  that the robot will reach the secret region within $\Delta T$ steps.

As a concrete case study, we consider a mobile robot described by the following unicycle-type nonholonomic model:
\begin{equation}\label{eq-unicycleoriginal}
    \dot{p}_x=v\cos\theta,~\dot{p}_y=v\sin\theta,~\dot{\theta}=\omega,
\end{equation}
where the physical constraints are given by $x\!=\![p_x,p_y,\theta]^T\!\in\!\X\!=\![0,10]^2\!\times\! [-\pi,\pi]$, $[v,\omega]^T\!\in\!\U\!=\! [0,1.0]\!\times\! [-\pi/15,\pi/15]$.
For convenience, we denote that $\p:=[p_x,p_y]^T$. To implement the proposed HyperSTL planning algorithm, we discretize system \eqref{eq-unicycleoriginal} by a sampling time of $0.5\text{s}$. 
% as
% \begin{equation}\label{eq-unicycle}
%     \begin{split}
%         p_{x,t+1}&=p_{x,t}+0.5*v_t\cos\theta_t\\
%     p_{y,t+1}&=p_{y,t}+0.5*v_t\sin\theta_t\\
%     \theta_{t+1}&=\theta_t+0.5*\omega_t
%     \end{split}
% \end{equation}
% where $[v_t,\omega_t]^T$ is the control input at time $t$ to be solved.

The robot operates in the workspace shown in Figure~\ref{fig:opacity} 
and the objective   is  to collect two kinds of packages and then transport them to the final destinations. Formally, we consider the following STL formula:\vspace{-3pt}
    \begin{flalign}\label{eq-stltask}
        \phi=&~\eventually_{[9, 11]}(\p\in A_1)\wedge\always_{[22, 23]}(\p\in A_2)~\wedge\notag\\
        &~\eventually_{[40, 41]}(\p\in (A_3\cup A_4 \cup A_5))\vspace{-3pt}
    \end{flalign}
    where $A_1\!=\![0,2]\!\times\![8,10]$, $A_2\!=\![4,6]\!\times\![4,6]$,  are two warehouses for two kinds of packages, respectively, and $A_3\!=\![8,10]\!\times\![8,10]$, $A_4\!=\![8,10]\!\times\![2.5,4.5]$, $A_5\!=\![8,10]\!\times\![0,2]$ are three available destinations. To improve the performance of the robot trajectory, we define the cost function as\vspace{-3pt}
    \begin{equation}\label{eq-costfunction}
        J(x_0,\u_1)=-\alpha\rho^{\phi}(\xi_f(x_0,\u_1))+(1-\alpha)\sum_{i=1}^{T}\notag(\|\p_i-\p_{i-1}\|^2),\vspace{-3pt}
    \end{equation}
    where $\alpha$ is the trade-off between   the  robustness of the task 
    and the  length of the trajectory. We set $\alpha=0.2$.

 \begin{figure*}
     \centering
     \subfigure[Example for pre-opacity]
     {\includegraphics[scale=0.42]{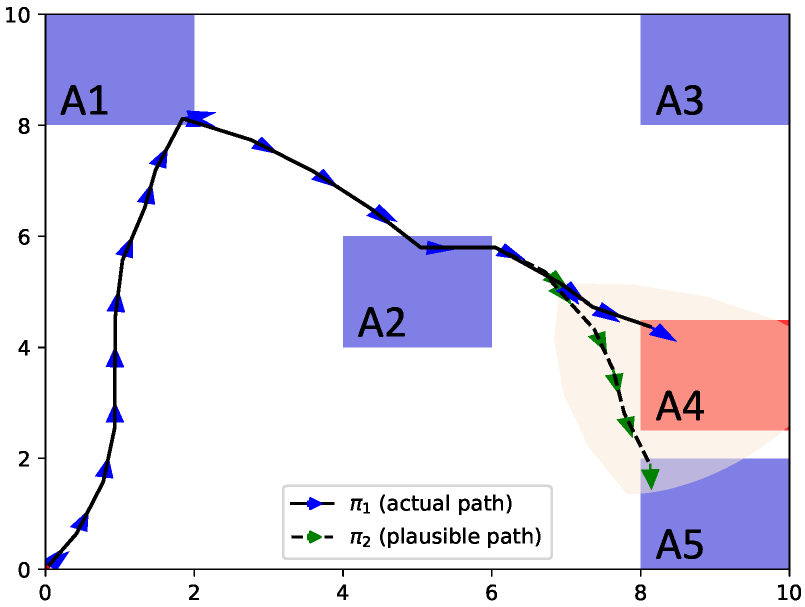}\label{fig:opacity}
     }
     \centering
     \subfigure[Example for anonymity]
     {\includegraphics[scale=0.42]{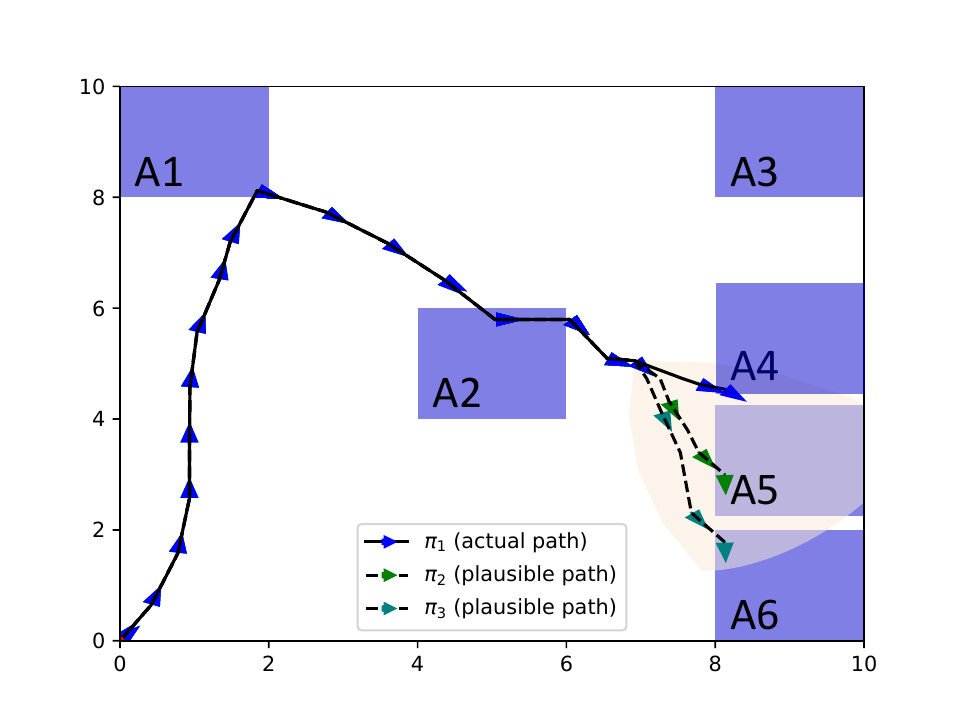}\label{fig:anonymity}
     }
     \centering
     \subfigure[Example for distinguishability]
     {\includegraphics[scale=0.42]{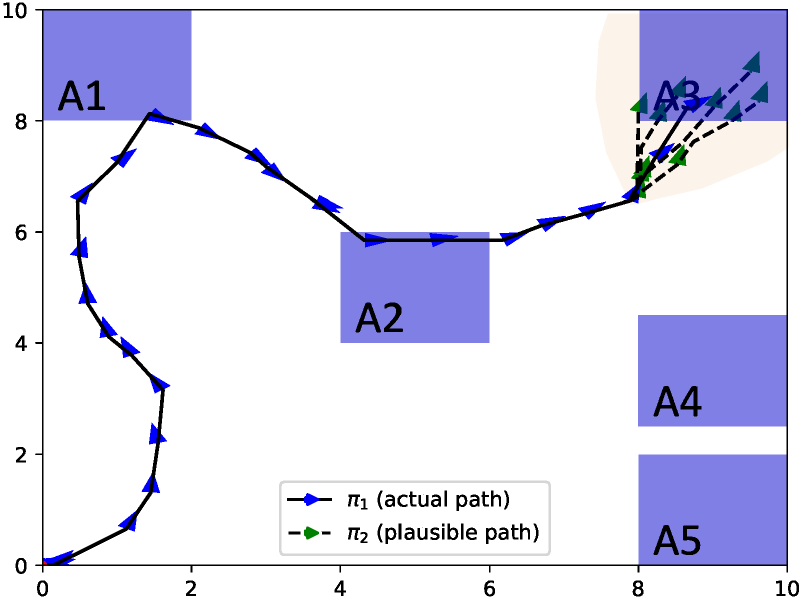}\label{fig:distinguishability}
     }
     \caption{Case studies for HyperSTL planning problems.}
 \end{figure*}

Now, we assume that destination $A_4$ serves as the secret region ($\X_S = A_4$) with an intruder prediction horizon of $\Delta T = 4$. 
We  synthesize a secure control plan $\mathbf{u}_1$ based on the system dynamics and the HyperSTL formulation of pre-opacity. Figure~\ref{fig:opacity} illustrates both the generated trajectory $\pi_1 = \xi_f(x_0, \mathbf{u}_1)$ and its companion trajectory that collectively satisfy the pre-opacity condition. 
The solution trajectory $\pi_1$ successfully accomplishes the STL task $\phi$ from~\eqref{eq-stltask}. 
Moreover, $\Delta$-steps before the robot enters $\X_S$, there still exists an alternative path $\pi_2$ that avoids $\X_S$ yet still satisfies the STL task $\phi$ (by navigating to $A_5$). Thus, even though the intruder observes the robot's exact position at every time step, it cannot conclusively determine whether the robot will enter the secret region $\Delta$-steps in advance when the robot follows the trajectory $\pi_1$.   

% In fact, we could also use the HyperSTL for the pre-opacity to determine whether a destination can be chosen as a secure secret region. To be specific, if $A_5$ is chosen for the destination as well as the secret region, i.e., $\X_S\!=\!A_5$, by Algorithm~\ref{alg-1}, we will also obtain a controller that makes system~\eqref{eq-unicycle} $\Delta$-step opaque. However, if $A_3$ is chosen, given system~\eqref{eq-unicycle} and the HyperSTL, Algorithm~\ref{alg-1} returns ``task $\Phi$ is infeasible".
%     That is, if the position and orientation of the robot is observed at $\Delta$-steps before it visits the secret region $\X_S\!:=\!A_3$, then the intruder can definitely determine, based on the system~\eqref{eq-unicycle} and the STL task $\phi$ in \eqref{eq-stltask}, that the robot will eventually enter the secret region within $\Delta T$-steps. Therefore,  if $A_3$ is chosen, there will be no plan for the robot to visit the destination and secure its option of destination.

In the notion of pre-opacity, it is required that whenever the robot enters the secret region, there must exist an alternative trajectory leading to a non-secret region. However, in some security problems, there is no predefined partition of secret and non-secret regions. Instead, the security requirement is that the robot must maintain at least $K$ plausible alternative intentions. This requirement aligns with the concept of $K$-anonymity in security literature.  

Formally, we assume there exists a set of disjoint regions $\X_A = \X_1 \dot{\cup} \cdots \dot{\cup} \X_n$, and the robot aims to keep its destination within $\X_A$ indistinguishable to an intruder. Specifically, if the robot is predicted to enter one of these regions within a given time horizon, the intruder should remain uncertain about which specific region will be visited. We formalize this requirement as follows.

\begin{mydef}[Anonymity]
    Given  system $\Sigma$ and  STL formula $\phi$, we say the system is \emph{$K$-anonymous} w.r.t. $\X_A$ if $\T_\Sigma$ satisfies
    \begin{align}
        &  \exists\pi_1.\exists\pi_2.\cdots \exists\pi_K. \nonumber \\
        & \always_{[0,T_\phi]}\left(
    \begin{aligned}
        &\eventually_{[\Delta T,\Delta T]}(x^{\pi_1}\!\in\!\X_A)\!\to\![(x^{\pi_1}\!=\!\cdots\!=\!x^{\pi_K})\wedge\\
        &
        \bigvee_{\{m_1,\dots, m_K\}\in I_K} \bigwedge_{i=1,\dots, K} \eventually_{[0,\Delta T]}(x^{\pi_i}\in\X_{m_i}) 
        ]
    \end{aligned}
    \right)\nonumber
    \end{align}
{where $I_K\subseteq 2^{\{1,\dots,n\}}$ is the set of all index sets with cardinality $K$.}
\end{mydef}

As an illustrative example, let us still consider the nonholonomic mobile robot \eqref{eq-unicycleoriginal} with the same physical constraints $\X$ and $\U$. 
The STL task $\phi$ is modified to \vspace{-3pt}
    \begin{flalign}\label{eq-stltask}
        \phi=\eventually_{[9, 11]}(\p\!\in\! A_1)\wedge\always_{[22, 23]}(\p\!\in\! A_2) \wedge 
         \eventually_{[40, 41]}(\p\!\in\! \X_A),\notag\vspace{-3pt}
    \end{flalign}
where $\X_A\!=\!A_3\dot{\cup}A_4\dot{\cup}A_5\dot{\cup}A_6$ as shown in Figure~\ref{fig:anonymity}. 
Here the critical regions are the four  possible destinations and 
we set \( K = 3 \) for anonymity. 
Using the system dynamics in~\eqref{eq-unicycleoriginal} and the HyperSTL specification for anonymity,
we compute  control input  \(\mathbf{u}_1\). 
The resulting trajectory is illustrated in Figure~\ref{fig:anonymity}.
When the robot follows path \(\pi_1\), it maintains three plausible destinations before reaching the actual target \(A_4\): \(A_4\) itself along with alternatives \(A_5\) and \(A_6\). This satisfies the \(3\)-anonymity requirement by preserving ambiguity about the final destination.

\subsection{Informed Planning without Ambiguity}
While the previous case studies addressed security concerns, $\exists$-HyperSTL can also model collaborative scenarios under implicit communication. 
We still consider the robot planning setting. 
Rather than an intruder observing the trajectory, we now assume a UAV monitors the robot's state in real-time for cooperative purposes. 
Without direct communication, the robot must ensure its plan remains unambiguous to prevent misinterpretation by the UAV.

Specifically, we define a critical region $\X_C\subseteq\X$ that the robot must visit. 
The UAV must be able to predict each visit to $\X_C$ with sufficient advance notice (a certain number of steps) to prepare appropriate assistance. 
This requirement can be formally characterized as follows.

\begin{mydef}[Distinguishability]
Given system $\Sigma$ and STL formula $\phi$, we say the system is \emph{$\Delta T$-step distinguishable} w.r.t. $\X_C$ if $\T_\Sigma$ satisfies
\begin{equation}
    \exists\pi_1.\forall\pi_2.\!\!\left[\phi^{\pi_1}
    \!\wedge\!\!\left(
    \phi^{\pi_2}\!\to\!
    \always_{[0,T_\phi]}\left(
    \begin{aligned}
        &\eventually_{[\Delta T,\Delta T]}(x^{\pi_1}\in\X_C)\\
        &\to[(x^{\pi_1}=x^{\pi_2})\to\\
        &\eventually_{[0,\Delta T]}(x^{\pi_2}\in\X_C)]
    \end{aligned}
    \right)
    \!\right)
    \!\right]\notag
\end{equation}
\end{mydef}

Intuitively, this definition states that we need to find a trajectory $\pi_1$ satisfying the STL task $\phi$ such that for any other trajectory $\pi_2$ also satisfying $\phi$, if $\pi_2$ reaches the critical region within $\Delta T$ steps, then $\pi_1$ must also reach the critical region within $\Delta T$ steps.  
Thus, there is no ambiguity about when the critical region must be visited.

As a case study, we still consider the nonholonomic mobile robot in \eqref{eq-unicycleoriginal} operating in the workspace shown in Figure~\ref{fig:distinguishability}. 
The STL task $\phi$ remains the same as in \eqref{eq-stltask}, and we use the cost function defined in \eqref{eq-costfunction}.
Let the critical region be $\X_C = A_3$, and set the prediction horizon to $\Delta T = 2$.  
Using Algorithm~\ref{alg-2}, we synthesize an unambiguous control plan $\mathbf{u}_1$ based on the system dynamics  and the HyperSTL formulation of distinguishability. 
The resulting trajectory $\pi_1 = \xi_f(x_0, \mathbf{u}_1)$ is depicted as the blue line in Figure~\ref{fig:distinguishability}.  
Observe that all other plausible trajectories satisfying $\phi$ (shown as green lines) also reach the critical region within two steps. 
This ensures that the UAV can uniquely determine the robot’s intended destination, enabling appropriate assistance actions to be taken.  

\vspace{-3pt}
\section{Conclusions}\label{sec:conclusion}
In this paper, we investigate the task and motion planning for dynamical systems under signal temporal logic  specifications. 
Unlike existing STL control synthesis approaches that impose temporal properties solely on individual trajectories, we leverage HyperSTL semantics to characterize inter-relationships between multiple system executions. 
Our approach  integrates  mixed-integer programming  optimization and counterexample-guided synthesis techniques in a novel manner to solve the control synthesis problem. 
Through case studies involving information-flow synthesis for dynamical systems, we demonstrate the effectiveness of our approach. 
Note that while our work focuses on planning problems, 
it is also applicable to HyperSTL model checking for discrete-time dynamical systems. In the future, we plan to extend our results to stochastic settings to achieve HyperSTL planning with probabilistic guarantees.

\bibliographystyle{plain}
\bibliography{myref}

\end{document}